\newtheorem{theorem}{Theorem}
\newtheorem{corollary}{Corollary}
\theoremstyle{definition}
\newtheorem{definition}{Definition}
\theoremstyle{remark}
\newtheorem{remark}{Remark}
\theoremstyle{definition}
\theoremstyle{definition}
\newcommand{\R}{\mathbb{R}}
\newcommand{\C}{\mathcal{C}}
\newcommand{\D}{\mathcal{D}}
\newcommand{\K}{\mathcal{K}}
\definecolor{darkblue}{RGB}{0,0,102}
\definecolor{lightblue}{RGB}{77,77,148}
\definecolor{gold}{RGB}{234, 170, 0}
\definecolor{metallic_gold}{RGB}{139, 111, 78}
\renewcommand{\cal}[1]{\mathcal{ #1 }}
\newcommand{\mb}[1]{\mathbf{ #1 }}
\newcommand{\bs}[1]{\boldsymbol{ #1 }}
\newcommand{\derp}[2]{\frac{\partial #1 }{\partial #2 }}
\DeclareMathOperator*{\esssup}{ess\,sup}
\title{\LARGE \textbf{A Control Barrier Perspective on Episodic Learning \\via Projection-to-State Safety}}
\author{Andrew J. Taylor, Andrew Singletary, Yisong Yue, Aaron D. Ames
\thanks{A. Taylor and Y. Yue are with the Department of Computing and Mathematical Sciences, California Institute of Technology, Pasadena, CA
    91125, USA, {\tt\small \{ajtaylor,yyue\}@caltech.edu}. A. Singletary and A. Ames are with the Department of Mechanical and Civil Engineering, California Institute of Technology, Pasadena, CA 91125, USA {\tt\small \{asinglet,ames\}@caltech.edu}.
}
}
\begin{document}

\maketitle

\begin{abstract}

In this paper we seek to quantify the ability of learning to improve safety guarantees endowed by Control Barrier Functions (CBFs). In particular, we investigate how model uncertainty in the time derivative of a CBF can be reduced via learning, and how this leads to stronger statements on the safe behavior of a system. To this end, we build upon the idea of Input-to-State Safety (ISSf) to define Projection-to-State Safety (PSSf), which characterizes degradation in safety in terms of a projected disturbance. This enables the direct quantification of both how learning can improve safety guarantees, and how bounds on learning error translate to bounds on degradation in safety. We demonstrate that a practical episodic learning approach can use PSSf to reduce uncertainty and improve safety guarantees in simulation and experimentally.

\end{abstract}

\section{Introduction}
\label{sec:intro}
Ensuring safety is of significant importance in the design of many modern control systems, from autonomous driving to industrial robotics. In practice, the models used in the control design process are imperfect, with model uncertainty arising due to parametric error and unmodeled dynamics. This uncertainty can cause the controller to render the system unsafe. As such,  it is necessary to quantify how the desired safety properties degrade with uncertainty.

Control Barrier Functions (CBFs) have become increasingly popular \cite{nguyen2016exponential, wang2017safety, ames2019control} as a tool for synthesizing controllers that provide safety via \textit{set invariance} \cite{blanchini2008set}. Safety guarantees endowed by a controller synthesized via CBFs rely on an accurate model of a system's dynamics, and may degrade in the presence of model uncertainty. The recently proposed definition of \textit{Input-to-State Safety} (ISSf) provides a tool for quantifying the impact on safety guarantees of such uncertainty or disturbances in the dynamics \cite{kolathaya2018input} by describing changes in the set kept invariant.

Due to its flexibility, it is increasingly popular to incorporate learning into safe controller synthesis \cite{wang2018safe,cheng2019end, ohnishi2019barrier,berkenkamp2016safe, fisac2018general}. 
Many of these approaches seek to provide statistical guarantees on the safety via assumptions made on learning performance. In practice however, limitations on learning performance arise due to factors such as covariate shift \cite{chen2016robust,liu2020robust}, limitations on model capacity, and optimization error. Thus, it is critical to understand the relationship between learning error and what safety guarantees can be ensured.

In this paper, we study how introducing learning models into safe controller synthesis done via CBFs can improve safety guarantees, and what safety guarantees can be made in the presence of learning error. In particular, we consider the episodic learning approach proposed in \cite{taylor2019learning}, where learning is done directly on the time derivative of a CBF. We integrate this approach with Input-to-State Safety to not only highlight how learning can intuitively lead to improved safety guarantees, but also provide a direct relationship between learning error and the degradation of safety guarantees. 

We make two main contributions in this paper. First, inspired by the idea of Projection-to-State Stability proposed in \cite{taylor2019control}, we  formulate general definitions of projections and projection compatible functions. Care must be taken to ensure these definitions preserve important topological properties for safety such as safe set membership. These definitions not only capture the definitions established in \cite{taylor2019control} as a special case, but allow us to define the notion of \textit{Projection-to-State Safety} (PSSf), which is a variant of the Input-to-State Safety property. Like ISSf, PSSf provides a tool for characterizing the degradation of safety in the presence of disturbances. Unlike ISSf, PSSf considers disturbances in a projected environment, allowing stronger guarantees on safe behavior. Second, we demonstrate the utility of PSSf by characterizing how data-driven learning models can improve safety guarantees, and how learning error leads to degradation in safety guarantees.

Our paper is organized as follows. Section \ref{sec:CBFs} provides a review of Control Barrier Functions and Input-to-State Safety. In Section \ref{sec:pssf} we define Projection-to-State Safety (PSSf) and discuss how PSSf enables quantifying degradation of safety in terms of a projected disturbance. Section \ref{sec:learning} defines a broad class of model uncertainty and explores how learning can be used to mitigate the impact of this uncertainty on safety. Lastly, in Section \ref{sec:simexp} we present both simulation and experimental results using PSSf to quantify the impact of learning error on safety guarantees for a Segway system.

\section{Preliminaries}
\label{sec:CBFs}
This section provides a review of Control Barrier Functions (CBFs) and Input-to-State Safe Control Barrier Functions (ISSf-CBFs). These tools will be used in Section \ref{sec:pssf} to define the notion of Projection-to-State Safety.

Consider the nonlinear control affine system given by:
\begin{equation}
    \label{eqn:eom}
    \dot{\mb{x}} = \mb{f}(\mb{x})+\mb{g}(\mb{x})\mb{u},
\end{equation}
where $\mb{x}\in\R^n$, $\mb{u}\in\R^m$, and $\mb{f}:\R^n\to\R^n$ and $\mb{g}:\R^n\to\R^{n\times m}$ are locally Lipschitz continuous on $\R^n$. Given a Lipschitz continuous state-feedback controller $\mb{k}:\R^n\to\R^m$, the closed-loop system dynamics are:
\begin{equation}
    \label{eqn:cloop}
    \dot{\mb{x}} = \mb{f}_{\textrm{cl}}(\mb{x}) \triangleq  \mb{f}(\mb{x})+\mb{g}(\mb{x})\mb{k}(\mb{x}).
\end{equation}
The assumption on local Lipschitz continuity of $\mb{f}$ and $\mb{k}$ implies that $\mb{f}_\textrm{cl}$ is locally Lipschitz continuous. Thus for any initial condition $\mb{x}_0 := \mb{x}(0) \in \R^n$ there exists a maximum time interval $I(\mb{x}_0) = [0, t_{\textrm{max}})$ such that $\mb{x}(t)$ is the unique solution to \eqref{eqn:cloop} on $I(\mb{x}_0)$ \cite{perko2013differential}. In the case that $\mb{f}_{\textrm{cl}}$ is forward complete, $t_{\textrm{max}}=\infty$.

A continuous function $\alpha:[0,a)\to\R_+$, with $a>0$, is said to belong to \textit{class $\cal{K}$} ($\alpha\in\cal{K}$) if $\alpha(0)=0$ and $\alpha$ is strictly monotonically increasing. If $a=\infty$ and $\lim_{r\to\infty}\alpha(r)=\infty$, then $\alpha$ is said to belong to \textit{class $\cal{K}_\infty$} ($\alpha\in\cal{K}_\infty$). A continuous function $\alpha:(-b,a)\to\R$, with $a,b>0$, is said to belong to \textit{extended class $\cal{K}$} ($\alpha\in\cal{K}_e$) if $\alpha(0)=0$ and $\alpha$ is strictly monotonically increasing. If $a,b=\infty$, $\lim_{r\to\infty}\alpha(r)=\infty$, and $\lim_{r\to-\infty}\alpha(r)=-\infty$, then $\alpha$ is said to belong to \textit{extended class $\cal{K}_\infty$} ($\alpha\in\cal{K}_{\infty,e}$)

The notion of safety that we consider is formalized by specifying a \textit{safe set} in the state space that the system must remain in to be considered safe. In particular, consider a set $\C\subset \R^n$ defined as the 0-superlevel set of a continuously differentiable function $h:\R^n \to \R$, yielding:
\begin{align}
    \C &\triangleq \left\{\mb{x} \in \R^n : h(\mb{x}) \geq 0\right\}, \label{eqn:safeset}\\
    \partial\C &\triangleq \{\mb{x} \in \R^n : h(\mb{x}) = 0\},\label{eqn:safesetboundary}\\
    \textrm{Int}(\C) &\triangleq \{\mb{x} \in \R^n : h(\mb{x}) > 0\}.\label{eqn:safetsetinterior}
\end{align}
We assume that $\C$ is nonempty and has no isolated points, that is, $\textrm{Int}(\C) \not = \emptyset \textrm{ and }\overline{\textrm{Int}(\C)} = \C$. We refer to $\cal{C}$ as the \textit{safe set}. This construction motivates the following definitions of forward invariant and safety:

\begin{definition}[\textit{Forward Invariant \& Safety}]
A set $\C\subset\R^n$ is \textit{forward invariant} if for every $\mb{x}_0\in\C$, the solution $\mb{x}(t)$ to \eqref{eqn:cloop} satisfies $\mb{x}(t) \in \C$ for all $t \in I(\mb{x}_0)$. The system \eqref{eqn:cloop} is \textit{safe} on the set $\C$ if the set $\C$ is forward invariant.
\end{definition}

Certifying the safety of the closed-loop system \eqref{eqn:cloop} with respect to a set $\C$ may be impossible if the controller $\mb{k}$ was not chosen to enforce the safety of $\C$. Control Barrier Functions can serve as a synthesis tool for attaining the forward invariance, and thus the safety of a set:
\begin{definition}[\textit{Control Barrier Function (CBF)}, \cite{ames2017control}]
Let $\C\subset\R^n$ be the 0-superlevel set of a continuously differentiable function $h:\R^n\to\R$ with $0$ a regular value. The function $h$ is a \textit{Control Barrier Function} (CBF) for \eqref{eqn:eom} on $\C$ if there exists $\alpha\in\K_{\infty,e}$ such that for all $\mb{x}\in\R^n$:
\begin{equation}
\label{eqn:cbf}
     \sup_{\mb{u}\in\R^m} \dot{h}(\mb{x},\mb{u}) \triangleq \derp{h}{\mb{x}}(\mb{x})\left(\mb{f}(\mb{x})+\mb{g}(\mb{x})\mb{u}\right)\geq-\alpha(h(\mb{x})).
\end{equation}
\end{definition}
We note that this definition can be relaxed such that the inequality only holds for all $\mb{x}\in E$ where $E$ is an open set satisfying $\C\subset E\subset\R^n$. Given a CBF $h$ for \eqref{eqn:eom} and a corresponding $\alpha\in\cal{K}_{\infty,e}$, we can consider the point-wise set of all control values that satisfy \eqref{eqn:cbf}:
\begin{equation*}
    K_{\textrm{cbf}}(\mb{x}) \triangleq \left\{\mb{u}\in\R^m ~\left|~ \dot{h}(\mb{x},\mb{u})\geq-\alpha(h(\mb{x})) \right. \right\}.
\end{equation*}
One of the main results in \cite{ames2014control, xu2015robustness} relates controllers taking values in  $K_{\textrm{cbf}}(\mb{x})$ to the safety of \eqref{eqn:eom} on $\C$:
\begin{theorem}
Given a set $\C\subset\R^n$ defined as the 0-superlevel set of a continuously differentiable function $h:\R^n\to\R$, if $h$ is a CBF for \eqref{eqn:eom} on $\C$, then any Lipschitz continuous controller $\mb{k}:\R^n\to\R^m$, such that $\mb{k}(\mb{x})\in K_{\textrm{cbf}}(\mb{x})$ for all $\mb{x}\in\R^n$, renders the system \eqref{eqn:eom} safe with respect to the set $\C$.
\end{theorem}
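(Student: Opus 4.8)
The plan is to reduce the claim to a scalar differential inequality along closed-loop solutions and then close the argument with a comparison step. First I would fix any initial condition $\mb{x}_0 \in \C$, so that $h(\mb{x}_0) \geq 0$. Since $\mb{f}$ and $\mb{k}$ are locally Lipschitz, $\mb{f}_{\textrm{cl}}$ is locally Lipschitz and \eqref{eqn:cloop} admits a unique solution $\mb{x}(t)$ on the maximal interval $I(\mb{x}_0) = [0, t_{\textrm{max}})$. Define the scalar signal $\eta(t) \triangleq h(\mb{x}(t))$. Because $h$ is continuously differentiable and $\mb{x}(\cdot)$ is $C^1$, the signal $\eta$ is $C^1$ with $\dot{\eta}(t) = \derp{h}{\mb{x}}(\mb{x}(t))\,\mb{f}_{\textrm{cl}}(\mb{x}(t)) = \dot{h}(\mb{x}(t),\mb{k}(\mb{x}(t)))$. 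The hypothesis $\mb{k}(\mb{x}) \in K_{\textrm{cbf}}(\mb{x})$ then yields the differential inequality $\dot{\eta}(t) \geq -\alpha(\eta(t))$ for all $t \in I(\mb{x}_0)$, with $\eta(0) = h(\mb{x}_0) \geq 0$.

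Next I would compare $\eta$ against the solution of the autonomous scalar initial value problem $\dot{y} = -\alpha(y)$ with $y(0) = \eta(0) \geq 0$. The essential observation is that $\alpha(0) = 0$, so $y \equiv 0$ is an equilibrium of the comparison system; a trajectory starting at $y(0) \geq 0$ therefore cannot cross into negative values and stays nonnegative for all time. A comparison argument (cf.\ \cite{perko2013differential}) then gives $\eta(t) \geq y(t) \geq 0$ on $I(\mb{x}_0)$, i.e.\ $h(\mb{x}(t)) \geq 0$ and hence $\mb{x}(t) \in \C$ for all $t \in I(\mb{x}_0)$. As $\mb{x}_0 \in \C$ was arbitrary, $\C$ is forward invariant and \eqref{eqn:eom} is safe on $\C$.

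I expect the main obstacle to be the regularity of $\alpha$ in the comparison step: an extended class $\mathcal{K}_{\infty,e}$ function is only assumed continuous and strictly increasing, not locally Lipschitz, so the comparison system need not have unique solutions and the textbook comparison lemma does not apply verbatim. I would handle this either by exploiting that the comparison dynamics are scalar and autonomous—where monotonicity of solutions and the impossibility of crossing the equilibrium $y = 0$ follow from continuity of $\alpha$ alone, without uniqueness—or by a Nagumo-type boundary argument: on $\partial\C$ one has $\eta = 0$, whence $\dot{\eta} \geq -\alpha(0) = 0$, so the closed-loop vector field never points strictly out of $\C$ (here the assumption that $0$ is a regular value of $h$ ensures $\partial\C$ is a well-behaved boundary). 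A secondary point to verify is that invariance is only claimed on $I(\mb{x}_0)$, so no global existence ($t_{\textrm{max}} = \infty$) is required; invariance on the maximal interval already matches the stated definition of safety.
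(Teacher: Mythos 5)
Your proof is correct. Note that the paper itself offers no proof of this theorem---it is imported from the cited references \cite{ames2014control, xu2015robustness}---and your argument is essentially the standard one found there: differentiate $\eta(t) = h(\mb{x}(t))$ along the closed-loop flow, use $\mb{k}(\mb{x}) \in K_{\textrm{cbf}}(\mb{x})$ to obtain $\dot{\eta} \geq -\alpha(\eta)$, and conclude nonnegativity of $\eta$. You also correctly identify the one genuine technical wrinkle: $\alpha \in \K_{\infty,e}$ is merely continuous, so the textbook Lipschitz comparison lemma does not apply verbatim, and both of your proposed fixes close this gap. The cleanest route is to apply your ``no crossing'' argument directly to $\eta$ rather than to a comparison trajectory, which avoids invoking any comparison theorem at all: if $\eta(t_1) < 0$ for some $t_1 \in I(\mb{x}_0)$, set $t_0 = \sup\{t \leq t_1 : \eta(t) \geq 0\}$, so that $\eta(t_0) = 0$ and $\eta < 0$ on $(t_0, t_1]$; on that interval $\dot{\eta}(t) \geq -\alpha(\eta(t)) > 0$ because $\alpha$ is \emph{extended} class $\K$ (negative on negative arguments), whence $\eta(t_1) > \eta(t_0) = 0$, a contradiction. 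Two hypotheses are doing real work here and deserve explicit mention in a polished write-up: the inequality \eqref{eqn:cbf} is required to hold for all $\mb{x} \in \R^n$ (not merely on $\C$), which is exactly what licenses the bound $\dot{\eta} \geq -\alpha(\eta)$ at points where $\eta < 0$; and your Nagumo-type fallback relies on $0$ being a regular value of $h$ (part of the paper's CBF definition), so that at each $\mb{x} \in \partial\C$ the tangent cone to $\C$ is the half-space $\{\mb{v} \in \R^n : \derp{h}{\mb{x}}(\mb{x})\mb{v} \geq 0\}$ and the condition $\dot{h}(\mb{x},\mb{k}(\mb{x})) \geq 0$ is precisely the subtangentiality condition.
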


To accommodate disturbances or model uncertainties, we consider a disturbance space $\cal{D}\in\R^n$ and a disturbed system:
\begin{equation}
\label{eqn:eomdist}
    \dot{\mb{x}} = \mb{f}(\mb{x}) + \mb{g}(\mb{x})\mb{u}+\mb{d}.
\end{equation}
with $\mb{d}\in\D$. The disturbance may be time-varying, state and/or input dependent. We will assume that when viewing $\mb{d}$ as a signal, $\mb{d}(t)$, it is essentially bounded in time, and define $\Vert\mb{d}\Vert_\infty\triangleq\esssup_{t\geq 0}\Vert\mb{d}(t)\Vert$. Under a Lipschitz continuous state-feedback controller $\mb{k}$, the closed-loop dynamics are then given by:
\begin{equation}
\label{eqn:cloopdist}
    \dot{\mb{x}} = \mb{f}_{\textrm{cl}}(\mb{x},\mb{d}) \triangleq \mb{f}(\mb{x})+\mb{g}(\mb{x})\mb{k}(\mb{x})+\mb{d}.
\end{equation}
In the presence of disturbances, a controller $\mb{k}$ synthesized to render the set $\C$ safe for the undisturbed dynamics \eqref{eqn:cloop} may fail to render $\C$ safe for the disturbed dynamics \eqref{eqn:cloopdist}. To quantify how safety degrades, we consider the notion of \textit{input-to-state safety} \cite{kolathaya2018input}:

\begin{definition}[\textit{Input-to-State Safety (ISSf)}]
\label{def:issf}
The closed-loop system \eqref{eqn:cloopdist} is \textit{input-to-state safe} (ISSf) on a set $\C\subset\R^n$ with respect to disturbances $\mb{d}$ if there exists $\overline{d}>0$ and $\gamma\in\cal{K}_\infty$ such that the set $\C_\mb{d}\supset\C$ defined as:
\begin{align}
    \C_{\mb{d}} &\triangleq \left\{\mb{x} \in \R^n : h(\mb{x})+\gamma(\Vert\mb{d}\Vert_\infty) \geq 0\right\}, \label{eqn:safesetdist}\\
    \partial\C_{\mb{d}} &\triangleq \{\mb{x} \in \R^n : h(\mb{x})  +\gamma(\Vert\mb{d}\Vert_\infty) = 0\},\label{eqn:safesetboundarydist}\\
    \textrm{Int}(\C_{\mb{d}}) &\triangleq \{\mb{x} \in \R^n : h(\mb{x})+\gamma(\Vert\mb{d}\Vert_\infty) > 0\},\label{eqn:safetsetinteriordist}
\end{align}
is forward invariant for all $\mb{d}$ satisfying $\Vert\mb{d}\Vert_\infty\leq\overline{d}$.
\end{definition}
We refer to $\C$ as an \textit{input-to-state safe} set (ISSf set) if such a set $\C_{\mb{d}}$ exists. This definition implies that though the set $\C$ may not be safe, a larger set $\C_\mb{d}$, depending on $\mb{d}$, is safe. If $\mb{d}\equiv\mb{0}$, we recover that the set $\C$ is safe. $\C$ can be certified as an ISSf set for the closed-loop system \eqref{eqn:cloopdist} with the following definition:

\begin{definition}[\textit{Input-to-State Safe Barrier Function (ISSf-BF)}]
Let $\C\subset\R^n$ be the 0-superlevel set of a continuously differentiable function $h:\R^n\to\R$ with $0$ a regular value. The function $h:\R^n\to\R$ is an \textit{Input-to-State Safe Barrier Function} (ISSf-BF) for \eqref{eqn:cloopdist} on $\cal{C}$ if there exists $\overline{d}>0$, $\alpha\in\K_{\infty,e}$, and $\iota\in\cal{K}_\infty$ such that: 
\begin{equation}
\label{eqn:ISS-BF}
    \derp{h}{\mb{x}}(\mb{x})(\mb{f}(\mb{x})+\mb{g}(\mb{x})\mb{k}(\mb{x}) +\mb{d}) \geq -\alpha(h(\mb{x}))-\iota(\Vert\mb{d}\Vert),
\end{equation}
for all $\mb{x}\in\R^n$ and $\mb{d}\in\R^n$ such that $\Vert\mb{d}\Vert\leq\overline{d}$.
\end{definition}
As shown in \cite{kolathaya2018input}, the existence of an ISSf-BF for \eqref{eqn:cloopdist} on $\C$ implies $\C$ is an ISSf set. Similarly to the undisturbed case, we can introduce the notion of a Control Barrier Function for synthesizing controllers that ensure input-to-state safety:

\begin{definition}[\textit{ISSf Control Barrier Function (ISSf-CBF)}]
Let $\C\subset\R^n$ be the 0-superlevel set of a continuously differentiable function $h:\R^n\to\R$ with $0$ a regular value. The function $h$ is an \textit{Input-to-State Safe Control Barrier Function} (ISSf-CBF) for \eqref{eqn:eomdist} on $\C$ if there exists $\overline{d}>0$, $\alpha\in\K_{\infty,e}$, and $\iota\in\cal{K}_\infty$ such that: 
\begin{align}
\label{eqn:hdotbound}
   \sup_{\mb{u}\in\R^m} \dot{h}(\mb{x},\mb{u},\mb{d}) \triangleq \derp{h}{\mb{x}}(\mb{x})(&\mb{f}(\mb{x})+\mb{g}(\mb{x})\mb{u}+\mb{d})\nonumber\\&\geq -\alpha(h(\mb{x}))-\iota(\Vert\mb{d}\Vert),
\end{align}
for all $\mb{x}\in\R^n$ and $\mb{d}\in\R^n$ satisfying $\Vert\mb{d}\Vert\leq\overline{d}$.
\end{definition}
We note that this definition is a more general definition of an ISSf-CBF compared to \cite{kolathaya2018input}, where disturbances enter the system with the inputs. We define the pointwise set:
\begin{equation*}
    K_{\textrm{issf}}(\mb{x}) \triangleq \left\{\mb{u}\in\R^m ~\left|~ \dot{h}(\mb{x},\mb{u},\mb{d})\geq-\alpha(h(\mb{x}))-\iota(\Vert\mb{d}\Vert) \right. \right\},
\end{equation*}
noting that for a fixed input the inequality must hold for all $\mb{d}\in\R^n$ satisfying $\Vert\mb{d}\Vert\leq\overline{d}$. Given this result, we have the following theorem:
\begin{theorem}
\label{thm:issfcbf2issf}
Given a set $\C\subset\R^n$ defined as the 0-superlevel set of a continuously differentiable function $h:\R^n\to\R$, if $h$ is an ISSf-CBF for \eqref{eqn:eomdist} on $\C$, then any Lipschitz continuous controller $\mb{k}:\R^n\to\R^m$, such that $\mb{k}(\mb{x})\in K_{\textrm{issf}}(\mb{x})$ for all $\mb{x}\in\R^n$, renders the set $\C$ ISSf for \eqref{eqn:cloopdist}.
\end{theorem}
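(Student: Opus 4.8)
The plan is to reduce the claim to the already-cited implication that the existence of an ISSf-BF certifies $\C$ as an ISSf set. The argument proceeds in two steps: first I would show that any controller selecting values in $K_{\textrm{issf}}(\mb{x})$ turns $h$ into an ISSf-BF for the closed-loop dynamics \eqref{eqn:cloopdist}, and then I would invoke the ISSf-BF $\Rightarrow$ ISSf result of \cite{kolathaya2018input}. This mirrors the structure of the undisturbed CBF theorem, where pointwise membership in $K_{\textrm{cbf}}(\mb{x})$ upgrades the control barrier condition into a barrier condition for the closed loop.

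For the first step I would fix an arbitrary $\mb{x}\in\R^n$ and substitute $\mb{u}=\mb{k}(\mb{x})$ into the definition of $\dot{h}$. Since $\mb{k}(\mb{x})\in K_{\textrm{issf}}(\mb{x})$, and membership in this set requires the inequality to hold for every $\mb{d}$ with $\Vert\mb{d}\Vert\leq\overline{d}$, this yields exactly the ISSf-BF inequality \eqref{eqn:ISS-BF} with the same $\alpha$, $\iota$, and $\overline{d}$. Lipschitz continuity of $\mb{k}$ guarantees that $\mb{f}_{\textrm{cl}}(\cdot,\mb{d})$ is locally Lipschitz, so \eqref{eqn:cloopdist} admits unique solutions and $h$ is a legitimate ISSf-BF for the closed loop.

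For the second step I would either cite \cite{kolathaya2018input} directly or, to stay self-contained, reconstruct the underlying comparison argument. The latter hinges on choosing $\gamma\in\cal{K}_\infty$ so that $\alpha(-\gamma(r))=-\iota(r)$, i.e. $\gamma=-\alpha^{-1}(-\iota(\cdot))$, which is well-defined and of class $\cal{K}_\infty$ because $\alpha\in\cal{K}_{\infty,e}$ has a strictly increasing inverse defined on all of $\R$. Using $\Vert\mb{d}(t)\Vert\leq\Vert\mb{d}\Vert_\infty$ for a.e. $t$, the ISSf-BF inequality gives $\dot{h}\geq-\alpha(h)-\iota(\Vert\mb{d}\Vert_\infty)$ along trajectories; writing $b\triangleq h+\gamma(\Vert\mb{d}\Vert_\infty)$ (a constant shift, since $\Vert\mb{d}\Vert_\infty$ is fixed) yields $\dot{b}\geq-\tilde\alpha(b)$ with $\tilde\alpha(s)\triangleq\alpha(s-\gamma(\Vert\mb{d}\Vert_\infty))+\iota(\Vert\mb{d}\Vert_\infty)\in\cal{K}_{\infty,e}$ satisfying $\tilde\alpha(0)=0$. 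A comparison argument against $\dot{y}=-\tilde\alpha(y)$, whose nonnegative solutions stay nonnegative by uniqueness, then shows $b\geq0$ is forward invariant, i.e. $\C_{\mb{d}}$ is forward invariant whenever $\Vert\mb{d}\Vert_\infty\leq\overline{d}$.

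I expect the only real obstacle to be bookkeeping rather than conceptual: verifying that the constructed $\gamma$ is genuinely class $\cal{K}_\infty$ (vanishing at zero, strict monotonicity, and radial unboundedness inherited from $\alpha$ and $\iota$), and that $\tilde\alpha$ remains extended class $\cal{K}_\infty$ after the shift, so that the comparison lemma applies over the full range $\Vert\mb{d}\Vert_\infty\leq\overline{d}$. Everything else follows directly from the definitions, with the passage from $K_{\textrm{issf}}$-membership to the ISSf-BF condition being immediate.
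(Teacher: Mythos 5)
Your proposal is correct and takes essentially the same route as the paper: the paper's proof consists precisely of observing that, under any Lipschitz controller with $\mb{k}(\mb{x})\in K_{\textrm{issf}}(\mb{x})$ for all $\mb{x}\in\R^n$, the function $h$ becomes an ISSf-BF for the closed-loop system \eqref{eqn:cloopdist} on $\C$, and then invoking the result of \cite{kolathaya2018input} that the existence of an ISSf-BF certifies $\C$ as an ISSf set. Your optional reconstruction of the comparison argument behind that cited implication (constructing $\gamma = -\alpha^{-1}(-\iota(\cdot))$ and the shifted function $\tilde\alpha$) goes beyond what the paper writes down, but it is consistent with it and does not alter the structure of the proof.
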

This theorem follows from the fact that under the controller $\mb{k}$, $h$ serves an ISSf-BF for \eqref{eqn:cloopdist} on $\C$.

\section{Projection-to-State Safety}
\label{sec:pssf}
Input-to-State Safety describes how the safe set $\C$ changes in terms of the disturbance as it appears in the state dynamics (see Definition \ref{def:issf} in Section \ref{sec:CBFs}). This description does not easily permit analysis of how safety degrades when the disturbance is more easily characterized by its impact in a Barrier Function derivative. This limitation motivates Projection-to-State Safety (PSSf), which enables a characterization of safety in terms of a projected disturbance.

We refer to a continuously differentiable function $\bs{\Pi}:\R^n\to\R^k$ as a \textit{projection}, and denote $\mb{y}=\bs{\Pi}(\mb{x})$. Considering the system governed by \eqref{eqn:eomdist}, the associated projected system is governed by the dynamics:
\begin{equation}
\label{eqn:projectedgraddyn}
    \dot{\mb{y}} = \mathbf{D}_{\bs{\Pi}}(\mb{x})\left(\mb{f}(\mb{x})+\mb{g}(\mb{x})\mb{u}\right)+\mb{D}_{\bs{\Pi}}(\mb{x})\mb{d},
\end{equation}
where $\mb{D}_{\bs{\Pi}}:\R^n\to\R^{k\times n}$ denotes the Jacobian of $\bs{\Pi}$. As will be seen when quantifying the impact of model uncertainty and learning error in Section \ref{sec:learning}, if the disturbance can be partially characterized in terms of the state and input, we may rewrite the projected dynamics as:
\begin{equation}
\label{eqn:projecteddyn}
    \dot{\mb{y}} = \mb{f}_\mb{y}(\mb{x})+\mb{g}_{\mb{y}}(\mb{x})\mb{u}+\bs{\delta},
\end{equation}
where $\mb{f}_{\mb{y}}:\R^n\to\R^k$ and $\mb{g}_{\mb{y}}:\R^n\to\R^{k\times m}$ are Lipschitz continuous on $\R^n$, and  $\bs{\delta}\in\R^k$ is referred to as the \textit{projected disturbance}. We note it is not explicitly necessary that the relationships $\mb{f}_{\mb{y}}(\mb{x}) = \mathbf{D}_{\bs{\Pi}}(\mb{x})\mb{f}(\mb{x})$, $\mb{g}_{\mb{y}}(\mb{x}) = \mathbf{D}_{\bs{\Pi}}(\mb{x})\mb{g}(\mb{x})$, and $\bs{\delta}=\mathbf{D}_{\bs{\Pi}}(\mb{x})\mb{d}$ hold, but are one possible relationship between the terms in \eqref{eqn:projectedgraddyn} and \eqref{eqn:projecteddyn}. 
For the following results, we will assume that $\bs{\delta}$ is essentially bounded in time and define $\Vert\bs{\delta}\Vert_\infty  \triangleq\esssup_{t\geq 0}\Vert\bs{\delta}(t)\Vert$. We are interested in relating behaviors of the projected system to the original system, motivating the following definition:

\begin{definition}[\textit{Projection-to-State Safety}]
The closed-loop system \eqref{eqn:cloopdist} is \textit{projection-to-state safe} (PSSf) on $\C$ with respect to the projection $\bs{\Pi}$ and projected disturbances $\bs{\delta}$ if there exists $\overline{\delta}>0$ and $\gamma\in\K_\infty$ such that the set $\C_{\bs{\delta}}\supset\C$,
\begin{align}
    \C_{\bs{\delta}} &\triangleq \left\{\mb{x} \in \R^n : h(\mb{x})+\gamma(\Vert\bs{\delta}\Vert_\infty) \geq 0\right\}, \label{eqn:safesetpss}\\
    \partial\C_{\bs{\delta}} &\triangleq \{\mb{x} \in \R^n : h(\mb{x})  +\gamma(\Vert\bs{\delta}\Vert_\infty) = 0\},\label{eqn:safesetboundarypss}\\
    \textrm{Int}(\C_{\bs{\delta}}) &\triangleq \{\mb{x} \in \R^n : h(\mb{x})+\gamma(\Vert\bs{\delta}\Vert_\infty) > 0\},\label{eqn:safetsetinteriorpss}
\end{align}
is forward invariant for all $\bs{\delta}$ satisfying $\Vert\bs{\delta}\Vert_\infty\leq\overline{\delta}$.
\end{definition}
In contrast to the definition of ISSf which enlarges the safe set in terms of the disturbance $\mb{d}$, PSSf quantifies how the safe set enlarges in terms of the projected disturbance $\bs{\delta}$. To utilize safety guarantees implied by ISSf-CBFs for analyzing PSSf behavior, we require the following definition:

\begin{definition}[\textit{Compatible Projection}]
A function $h_{\bs{\Pi}}:\R^k\to\R$ is said to be a \textit{compatible projection} for the function $h:\R^n\to\R$ with respect to the projection $\bs{\Pi}:\R^n\to\R^k$ if there exists $\underline{\sigma}, \overline{\sigma}\in\cal{K}_{\infty,e}$ such that for all $\mb{x}\in\R^n$:
\begin{equation}
\label{eqn:compatproj}
    \underline{\sigma}(h(\mb{x})) \leq h_{\bs{\Pi}}(\bs{\Pi}(\mb{x}))\leq\overline{\sigma}(h(\mb{x})).
\end{equation}
\end{definition}

\begin{remark}
If $h$ and $h_{\bs{\Pi}}$ are norms on $\R^n$ and $\R^k$, respectively, then $\bs{\Pi}$ reduces to a \textit{dynamic projection} as introduced in \cite{taylor2019control}. Whereas dynamic projections preserve the topological notion of a point between the state and projected spaces, compatible projections can preserve more interesting topological structures such as sets.
\end{remark}

\begin{remark}
The definition of a compatible projection can be abstractly viewed through the lens of category theory, mirroring the idea that one proves a property by mapping a system to the ``simplist'' type of system that has that property \cite{ames2006stability}.  For safety, these are dynamical systems defined on the entire real line, with the safe set being the positive reals.  Thus $h_{\bs{\Pi}}$ is a compatible projection if the following diagram: 
\begin{equation*}
\begin{tikzcd}
\R^n \arrow[d,"\bs{\Pi}"'] \arrow[r, "h"] &  \R \\
\R^k \arrow[ru,"h_{\bs{\Pi}}"'] &  
\end{tikzcd}
\end{equation*}
\emph{commutes up to class $\cal{K}$ functions}, i.e., \eqref{eqn:compatproj} being satisfied.
\end{remark}

In the context of safety, if a set $\C\subset\R^n$ is defined via a continuously differentiable function $h$ as in \eqref{eqn:safeset}-\eqref{eqn:safetsetinterior}, a compatible projection $h_{\bs{\Pi}}$ for the function $h$ with respect to $\bs{\Pi}$ defines a corresponding set $\C_{\bs{\Pi}}\subset\R^k$:
\begin{align}
    \C_{\bs{\Pi}} &\triangleq \left\{\mb{y} \in \R^k : h_{\bs{\Pi}}(\mb{y}) \geq 0\right\}, \label{eqn:safesetproj}\\
    \partial\C_{\bs{\Pi}} &\triangleq \{\mb{y} \in \R^k : h_{\bs{\Pi}}(\mb{y}) = 0\},\label{eqn:safesetboundaryproj}\\
    \textrm{Int}(\C_{\bs{\Pi}}) &\triangleq \{\mb{y} \in \R^k : h_{\bs{\Pi}}(\mb{y}) > 0\}.\label{eqn:safetsetinteriorproj}
\end{align}
The inequalities in \eqref{eqn:compatproj} preserve the notion of what states are considered safe between the state space and projected space, such that $\mb{x}\in\C\implies\bs{\Pi}(\mb{x})\in\C_{\bs{\Pi}}$. The preceding implication is also true of the boundaries and interiors of the two sets. The following theorem allows us to extend ISSf properties of the projected system on $\C_{\bs{\Pi}}$ to PSSf properties of the original system on $\C$.

\begin{theorem}
\label{thm:isscbf2pssf}
Let $\C\subset\R^n$ be the 0-superlevel set of a continuously differentiable function $h:\R^n\to\R$ with $0$ a regular value. The disturbed system \eqref{eqn:eomdist} can be rendered PSSf on $\C$ with respect to the projection $\bs{\Pi}$ and projected disturbances $\bs{\delta}$ if there exists a compatible projection $h_{\bs{\Pi}}$ for $h$ with respect to $\bs{\Pi}$ and Lipschitz continuous controller $\mb{k}:\R^n\to\R^m$ such that $h_{\bs{\Pi}}$ is an ISSf-CBF for the projected dynamics \eqref{eqn:projecteddyn} on $\C_{\bs{\Pi}}$ and $\mb{k}(\mb{x})\in K_{\textrm{issf}}(\mb{x})$ with:
\begin{equation*}
    K_{\textrm{issf}}(\bs{\mb{x}}) \triangleq \left\{\mb{u}\in\R^m ~\left|~ \begin{array}{l} \dot{h}_{\bs{\Pi}}(\bs{\Pi}(\mb{x}),\mb{u})\geq\\-\alpha(h_{\bs{\Pi}}(\bs{\Pi}(\mb{x})))-\iota(\Vert\bs{\delta}\Vert) \end{array}\right. \right\},
\end{equation*}
\end{theorem}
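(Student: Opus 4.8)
The plan is to reduce the claim to Theorem~\ref{thm:issfcbf2issf} applied in the projected space, and then to transport the resulting invariance back to $\R^n$ through the compatible-projection inequalities \eqref{eqn:compatproj}. The first step I would carry out is a \emph{solution-correspondence}: for a trajectory $\mb{x}(t)$ of the original closed-loop system \eqref{eqn:cloopdist} under $\mb{k}$, the chain rule gives $\frac{d}{dt}\bs{\Pi}(\mb{x}(t)) = \mb{D}_{\bs{\Pi}}(\mb{x})(\mb{f}(\mb{x})+\mb{g}(\mb{x})\mb{k}(\mb{x})+\mb{d})$, so that $\mb{y}(t) = \bs{\Pi}(\mb{x}(t))$ is exactly a solution of the projected closed-loop dynamics \eqref{eqn:projecteddyn}. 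Because $h_{\bs{\Pi}}$ is an ISSf-CBF for \eqref{eqn:projecteddyn} on $\C_{\bs{\Pi}}$ and $\mb{k}(\mb{x})\in K_{\textrm{issf}}(\mb{x})$, Theorem~\ref{thm:issfcbf2issf} certifies $\C_{\bs{\Pi}}$ as ISSf: there exist $\overline{\delta}>0$ and $\gamma_{\bs{\Pi}}\in\K_\infty$ such that $\C_{\bs{\Pi},\bs{\delta}} \triangleq \{\mb{y}\in\R^k : h_{\bs{\Pi}}(\mb{y})+\gamma_{\bs{\Pi}}(\Vert\bs{\delta}\Vert_\infty)\geq 0\}$ is forward invariant along $\mb{y}(t)$ for all $\Vert\bs{\delta}\Vert_\infty\leq\overline{\delta}$.

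Via the correspondence, the pulled-back set $\widetilde{\C}_{\bs{\delta}} \triangleq \bs{\Pi}^{-1}(\C_{\bs{\Pi},\bs{\delta}}) = \{\mb{x}\in\R^n : h_{\bs{\Pi}}(\bs{\Pi}(\mb{x}))+\gamma_{\bs{\Pi}}(\Vert\bs{\delta}\Vert_\infty)\geq 0\}$ is then forward invariant for \eqref{eqn:cloopdist}. I would relate this to a set of the form demanded by \eqref{eqn:safesetpss} using \eqref{eqn:compatproj}. Define the candidate gains $\gamma(r) \triangleq -\overline{\sigma}^{-1}(-\gamma_{\bs{\Pi}}(r))$ and $\underline{\gamma}(r) \triangleq -\underline{\sigma}^{-1}(-\gamma_{\bs{\Pi}}(r))$; since $\underline{\sigma},\overline{\sigma}\in\cal{K}_{\infty,e}$ are invertible on all of $\R$ with $\sigma^{-1}(0)=0$ and $\lim_{r\to-\infty}\sigma^{-1}(r)=-\infty$, a short monotonicity check shows both gains lie in $\K_\infty$. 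The two inequalities of \eqref{eqn:compatproj} then sandwich the pulled-back set, $\{h(\mb{x})+\underline{\gamma}(\Vert\bs{\delta}\Vert_\infty)\geq 0\}\subseteq\widetilde{\C}_{\bs{\delta}}\subseteq\{h(\mb{x})+\gamma(\Vert\bs{\delta}\Vert_\infty)\geq 0\}$, with each set containing $\C$.

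The two halves of \eqref{eqn:compatproj} then play complementary roles. For the initial condition, the lower bound gives $\mb{x}(0)\in\{h+\underline{\gamma}\geq 0\}\Rightarrow\underline{\sigma}(h(\mb{x}(0)))\geq-\gamma_{\bs{\Pi}}(\Vert\bs{\delta}\Vert_\infty)\Rightarrow h_{\bs{\Pi}}(\bs{\Pi}(\mb{x}(0)))\geq-\gamma_{\bs{\Pi}}(\Vert\bs{\delta}\Vert_\infty)$, i.e.\ $\bs{\Pi}(\mb{x}(0))\in\C_{\bs{\Pi},\bs{\delta}}$. Forward invariance of $\C_{\bs{\Pi},\bs{\delta}}$ then propagates $h_{\bs{\Pi}}(\bs{\Pi}(\mb{x}(t)))\geq-\gamma_{\bs{\Pi}}(\Vert\bs{\delta}\Vert_\infty)$ for all $t$. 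For the conclusion, the upper bound converts this into $\overline{\sigma}(h(\mb{x}(t)))\geq h_{\bs{\Pi}}(\bs{\Pi}(\mb{x}(t)))\geq-\gamma_{\bs{\Pi}}(\Vert\bs{\delta}\Vert_\infty)$, hence $h(\mb{x}(t))\geq-\gamma(\Vert\bs{\delta}\Vert_\infty)$, i.e.\ $\mb{x}(t)\in\{h+\gamma\geq 0\}$. Thus every trajectory entering through $\{h+\underline{\gamma}\geq 0\}$ remains trapped in $\{h+\gamma\geq 0\}$, which is the $\C_{\bs{\delta}}$ of \eqref{eqn:safesetpss}.

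The main obstacle is precisely the asymmetry between $\underline{\sigma}$ and $\overline{\sigma}$ in \eqref{eqn:compatproj}: the initial-condition step is cleanest with the larger gain $\underline{\gamma}$, while the recovered conclusion only certifies membership in the smaller set with gain $\gamma\leq\underline{\gamma}$, so a single forward-invariant set of the prescribed form need not close from the two half-implications alone. Reconciling this is the crux, and I would resolve it by one of three routes: taking $\underline{\sigma}=\overline{\sigma}$ so the sandwich collapses and $\widetilde{\C}_{\bs{\delta}}$ is itself of the required form; assuming tight bounds so that $\underline{\gamma}=\gamma$; or simply declaring the PSSf-certifying set to be the pulled-back $\widetilde{\C}_{\bs{\delta}}$ directly and reading off $\gamma$ from it. The remaining verifications — that $0$ is a regular value where needed, that the enlarged sets are nonempty and contain $\C$, and that $\mb{x}(t)$ stays defined on its maximal interval — I expect to be routine once this asymmetry is settled.
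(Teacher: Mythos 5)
Your proposal takes essentially the same route as the paper's proof: apply Theorem~\ref{thm:issfcbf2issf} to $h_{\bs{\Pi}}$ and the projected dynamics to conclude that $\C_{\bs{\Pi},\bs{\delta}}=\{\mb{y}\in\R^k : h_{\bs{\Pi}}(\mb{y})+\gamma_{\bs{\Pi}}(\Vert\bs{\delta}\Vert_\infty)\geq0\}$ is forward invariant, then pull the resulting bound back to $\R^n$ through \eqref{eqn:compatproj}. In one respect you are cleaner than the paper: you convert $\overline{\sigma}(h(\mb{x}(t)))\geq-\gamma_{\bs{\Pi}}(\Vert\bs{\delta}\Vert_\infty)$ into $h(\mb{x}(t))\geq-\gamma(\Vert\bs{\delta}\Vert_\infty)$ by inverting $\overline{\sigma}$ directly, with $\gamma(r)=-\overline{\sigma}^{-1}(-\gamma_{\bs{\Pi}}(r))$. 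The paper instead halves both terms and invokes the class-$\K$ triangle inequality to obtain the gain $\overline{\sigma}^{-1}\circ\gamma_{\bs{\Pi}}$; that inequality is stated for nonnegative arguments, and in the paper $\overline{\sigma}^{-1}$ is applied at the possibly negative argument $\tfrac{1}{2}\overline{\sigma}(h(\mb{x}(t)))$, where it can fail (e.g.\ if $\overline{\sigma}$ compresses negative values strongly). Your direct inversion is tighter and valid without qualification.

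Two corrections to the rest. First, your gain ordering is backwards and contradicts your own sandwich: since \eqref{eqn:compatproj} forces $\underline{\sigma}\leq\overline{\sigma}$ on the range of $h$, monotonicity gives $\underline{\sigma}^{-1}\geq\overline{\sigma}^{-1}$, hence $\underline{\gamma}=-\underline{\sigma}^{-1}(-\gamma_{\bs{\Pi}})\leq-\overline{\sigma}^{-1}(-\gamma_{\bs{\Pi}})=\gamma$; the entry set $\{h+\underline{\gamma}\geq0\}$ is the \emph{smaller} set, exactly as your displayed inclusions show, not the larger one as your prose asserts. Had your stated ordering $\gamma\leq\underline{\gamma}$ been correct there would be no obstacle at all: any $\mb{x}_0\in\{h+\gamma\geq0\}$ would lie in the entry set, and forward invariance of $\{h+\gamma\geq0\}$ would close immediately. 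Second, the crux you flag is real --- the argument delivers only ``trajectories starting in the smaller set remain in the larger set,'' which is not literally forward invariance of a set of the form \eqref{eqn:safesetpss} --- but you should know the paper does not resolve it either: its proof establishes precisely this one-directional implication (every $\mb{x}_0$ with $\bs{\Pi}(\mb{x}_0)\in\C_{\bs{\Pi},\bs{\delta}}$, a set of initial conditions containing $\C$, satisfies $h(\mb{x}(t))+\gamma'(\Vert\bs{\delta}\Vert_\infty)\geq0$ on $I(\mb{x}_0)$) and then simply declares $\C_{\bs{\delta}}$ forward invariant. None of your three proposed repairs yields the theorem exactly as stated (the first two strengthen the hypotheses, and the pulled-back set in the third is not a shifted superlevel set of $h$), so do not regard this as a defect of your attempt alone: the honest conclusion that both you and the paper actually prove is the one-directional containment, which is what the PSSf claim is implicitly taken to mean.
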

\begin{proof}
As $h_{\bs{\Pi}}$ is an ISSf-CBF for \eqref{eqn:projecteddyn} on $\C_{\bs{\Pi}}$ and the state-feedback controller satisfies $\mb{k}(\mb{x})\in K_{\textrm{issf}}(\mb{x})$, Theorem \ref{thm:issfcbf2issf} implies that the controller $\mb{k}$ renders the set $\C_{\bs{\Pi}}$ input-to-state safe for all $\bs{\delta}$ satisfying $\Vert\bs{\delta}\Vert_\infty\leq\overline{\delta}$. In particular, there exists $\gamma\in\cal{K}_\infty$ such that the set:
\begin{equation}
    \C_{\bs{\Pi},\bs{\delta}} \triangleq \left\{\mb{y}\in\R^k ~|~ h_{\bs{\Pi}}(\mb{y})+\gamma(\Vert\bs{\delta}\Vert_\infty)\geq 0\right\},
\end{equation}
is safe. Let $\mb{x}_0\in\R^n$ be such that $\mb{y}_0=\bs{\Pi}(\mb{x}_0)\in\C_{\bs{\Pi},\bs{\delta}}$. With $\mb{x}(0)=\mb{x}_0$ (implying $\mb{y}(0)=\mb{y}_0$), safety of $\C_{\bs{\Pi},\bs{\delta}}$ implies:
\begin{equation}
    h_{\bs{\Pi}}(\bs{\Pi}(\mb{x}(t))) + \gamma(\Vert\bs{\delta}\Vert_{\infty})\geq 0,
\end{equation}
for $t\in I(\mb{x}_0)$. As $h_{\bs{\Pi}}$ is a compatible projection for $h$ with respect to $\bs{\Pi}$, we have:
\begin{equation}
   \overline{\sigma}(h(\mb{x}(t))) + \gamma(\Vert\bs{\delta}\Vert_{\infty})\geq 0,
\end{equation}
Multiplying both sides by $\frac{1}{2}$ and using that $\overline{\sigma}\in\cal{K}_{\infty,e}$, it follows that:
\begin{equation}
   \overline{\sigma}^{-1}\left(\frac{1}{2}\overline{\sigma}(h(\mb{x}(t))) + \frac{1}{2}\gamma(\Vert\bs{\delta}\Vert_{\infty})\right)\geq 0,
\end{equation}
The triangle inequality for class $\cal{K}$ functions \cite{kellett2014compendium} implies:
\begin{equation}
    h(\mb{x}(t))+\underbrace{\overline{\sigma}^{-1}(\gamma(\Vert\bs{\delta}\Vert_\infty))}_{\gamma'(\Vert\bs{\delta}\Vert_\infty)} \geq 0,
\end{equation}
for all $t\in I(\mb{x}_0)$, implying the set $\C_{\bs{\delta}}$ defined as in \eqref{eqn:safesetpss}-\eqref{eqn:safetsetinteriorpss} using $\gamma'$ is forward invariant, and hence safe. Thus the closed-loop system \eqref{eqn:eomdist} is PSSf on $\C$ with respect to $\bs{\Pi}$ and corresponding projected disturbances $\bs{\delta}$.
\end{proof}


\begin{corollary}
\label{cor:cbf2pssf}
Let $\C\subset\R^n$ be the 0-superlevel set of a continuously differentiable function $h:\R^n\to\R$ with $0$ a regular value. 
Viewing $h$ as a projection such that $y=h(\mb{x})$, let the projected dynamics be given by:
\begin{equation}
\label{eqn:projscalardyn}
\dot{y} = f_y(\mb{x})+\mb{g}_y(\mb{x})\mb{u}+\delta
\end{equation}
with projected disturbances $\delta\in\R$. If there exists a Lipschitz continuous feedback controller $\mb{k}:\R^n\to\R^m$ such that:
\begin{equation}
\label{eqn:projcbfcond}
   f_y(\mb{x}) + \mb{g}_y(\mb{x})\mb{k}(\mb{x}) \geq -\alpha(y),
\end{equation}
and there exists $\overline{\delta}>0$ satisfying $\vert\delta\vert<\overline{\delta}$, then the disturbed system \eqref{eqn:eomdist} can be rendered PSSf on $\C$ with respect to the projection $h$ and projected disturbances $\delta$.
\end{corollary}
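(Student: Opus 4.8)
The plan is to exhibit this as a direct instance of Theorem~\ref{thm:isscbf2pssf}, taking the projection to be $\bs{\Pi} = h$ itself so that the projected space is $\R$ (i.e.\ $k = 1$) and $y = h(\mb{x})$. With this choice the projected dynamics \eqref{eqn:projscalardyn} are exactly of the form \eqref{eqn:projecteddyn}, with $f_y$, $\mb{g}_y$ playing the roles of $\mb{f}_{\mb{y}}$, $\mb{g}_{\mb{y}}$ and $\delta$ the projected disturbance. All that remains is to produce a compatible projection $h_{\bs{\Pi}}:\R\to\R$ for $h$ with respect to $\bs{\Pi}$ and to verify that $h_{\bs{\Pi}}$ is an ISSf-CBF for \eqref{eqn:projscalardyn} with the given controller $\mb{k}$ lying in $K_{\textrm{issf}}$.

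First I would take $h_{\bs{\Pi}}$ to be the identity map, $h_{\bs{\Pi}}(y) = y$. Since $\bs{\Pi}(\mb{x}) = h(\mb{x})$, this gives $h_{\bs{\Pi}}(\bs{\Pi}(\mb{x})) = h(\mb{x})$, so the compatibility inequalities \eqref{eqn:compatproj} hold with equality using $\underline{\sigma} = \overline{\sigma} = \mathrm{id}$, which belongs to $\cal{K}_{\infty,e}$. Moreover $0$ is trivially a regular value of the identity, so $h_{\bs{\Pi}}$ is an admissible barrier candidate for the projected system. The induced projected safe set $\C_{\bs{\Pi}}$ from \eqref{eqn:safesetproj} is then simply $\R_{\geq 0}$.

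Next I would verify the ISSf-CBF condition. Along \eqref{eqn:projscalardyn} we have $\dot{h}_{\bs{\Pi}} = \dot{y} = f_y(\mb{x}) + \mb{g}_y(\mb{x})\mb{u} + \delta$. Evaluating at $\mb{u} = \mb{k}(\mb{x})$ and invoking the hypothesis \eqref{eqn:projcbfcond} together with $\delta \geq -\vert\delta\vert$ gives
\[
  f_y(\mb{x}) + \mb{g}_y(\mb{x})\mb{k}(\mb{x}) + \delta \;\geq\; -\alpha(y) + \delta \;\geq\; -\alpha(h_{\bs{\Pi}}(y)) - \vert\delta\vert,
\]
for every $\delta$ with $\vert\delta\vert \leq \overline{\delta}$. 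Choosing $\iota = \mathrm{id} \in \cal{K}_\infty$, this simultaneously shows that the supremum condition defining an ISSf-CBF is met (a fortiori, since $\mb{k}(\mb{x})$ already attains it) and that $\mb{k}(\mb{x}) \in K_{\textrm{issf}}(\mb{x})$ for all $\mb{x}$. Theorem~\ref{thm:isscbf2pssf} then delivers PSSf on $\C$ with respect to the projection $h$ and projected disturbances $\delta$.

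I do not expect a genuine obstacle here: the substance is entirely inherited from Theorem~\ref{thm:isscbf2pssf}, and the only points requiring care are cosmetic --- confirming that $\mathrm{id}$ is a legitimate $\cal{K}_{\infty,e}$ (resp.\ $\cal{K}_\infty$) function and that the one-sided bound $\delta \geq -\vert\delta\vert$ suffices to absorb the disturbance into $\iota$. In effect, the corollary is the statement that when one projects through the barrier function itself, the general machinery collapses to the familiar scalar CBF inequality \eqref{eqn:projcbfcond}.
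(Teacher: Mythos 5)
Your proposal is correct and follows exactly the paper's own argument: take the identity map as the compatible projection for $h$ (with $\underline{\sigma}=\overline{\sigma}=\mathrm{id}$), verify the ISSf-CBF inequality along \eqref{eqn:projscalardyn} at $\mb{u}=\mb{k}(\mb{x})$ using \eqref{eqn:projcbfcond} and $\delta\geq-\vert\delta\vert$, and invoke Theorem~\ref{thm:isscbf2pssf}. The only difference is that you spell out the choice $\iota=\mathrm{id}$ and the membership $\mb{k}(\mb{x})\in K_{\textrm{issf}}(\mb{x})$ slightly more explicitly than the paper does, which is a matter of exposition rather than substance.
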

\begin{proof}
We first note that the identity map $I:\R\to\R$ is a compatible projection for $h$:
\begin{equation}
    h(\mb{x}) \leq I(h(\mb{x})) \leq h(\mb{x})
\end{equation}
with $\underline{\sigma}(r)=\overline{\sigma}(r) = r$. Furthermore, the inequality in \eqref{eqn:projcbfcond} implies the identity map can be viewed as an ISSf-CBF for the projected dynamics \eqref{eqn:projscalardyn}:
\begin{equation}
    \sup_{\mb{u}\in\R^m}\dot{I}(\mb{x},\mb{u},\delta) \geq\dot{I}(\mb{x},\mb{k}(\mb{x}),\delta)\geq -\alpha(I(y))-\vert\delta\vert,
\end{equation}
for all $\mb{x}\in\R^n$ and $\delta\in\R$ satisfying $\vert\delta\vert\leq\overline{\delta}$. Therefore the system \eqref{eqn:eomdist} can be rendered PSSf on $\C$ with respect to the projection $h$ and projected disturbances $\delta$ by Theorem \ref{thm:isscbf2pssf}. 
\end{proof}

\section{Integration With Learning}
\label{sec:learning}
In this section we consider a structured form of uncertainty in affine control systems. We discuss the impact of this uncertainty in a CBF time derivative, and on the PSSf behavior of the system. We demonstrate how learning can be used to mitigate the resulting impact on safety.

In practice, the system dynamics \eqref{eqn:eom} are not known during  control design  due to parametric error and unmodeled dynamics. Instead, a nominal model of the system is utilized:
\begin{equation}
\label{eqn:eomnominal}
    \widehat{\dot{\mb{x}}} = \widehat{\mb{f}}(\mb{x}) + \widehat{\mb{g}}(\mb{x})\mb{u},
\end{equation}
where $\widehat{\mb{f}}:\R^n\to\R^n$ and $\widehat{\mb{g}}:\R^n\to\R^{n\times m}$ are assumed to be Lipschitz continuous on $\R^n$. By adding and subtracting \eqref{eqn:eomnominal} to \eqref{eqn:eom}, the dynamics of the system can be expressed as:
\begin{equation}
    \dot{\mb{x}} = \widehat{\mb{f}}(\mb{x}) + \widehat{\mb{g}}(\mb{x})\mb{u} + \overbrace{\underbrace{\mb{f}(\mb{x})-\widehat{\mb{f}}(\mb{x})}_{\mb{b}(\mb{x})}+\underbrace{\left(\mb{g}(\mb{x})-\widehat{\mb{g}}(\mb{x})\right)}_{\mb{A}(\mb{x})}\mb{u}}^{\mb{d}},
\end{equation}
where the unknown disturbance $\mb{d}=\mb{b}(\mb{x})+\mb{A}(\mb{x})\mb{u}$ is assumed to be time invariant, but explicitly depends on the state and input to the system.

\begin{figure*}[h]
    \hspace*{0 cm}
     \centering
    \begin{subfloat}
        {\includegraphics[trim={0in, 0in, 0in, 0in}, clip, scale =0.172, valign =t ]{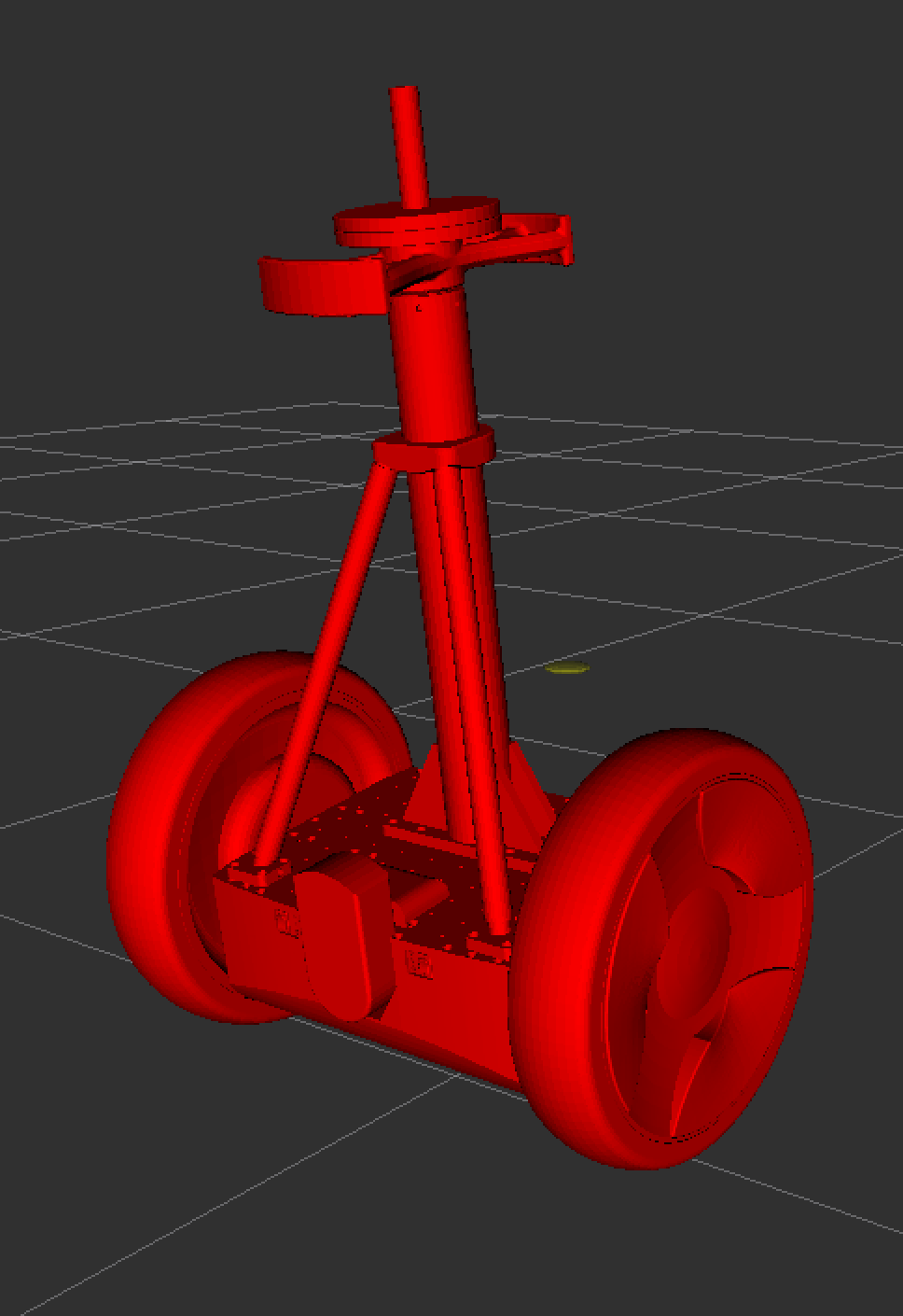}}
    \end{subfloat}
    \hfill
    \hspace*{0 cm}
    \begin{subfloat}
       {\includegraphics[scale =0.43, valign =t]{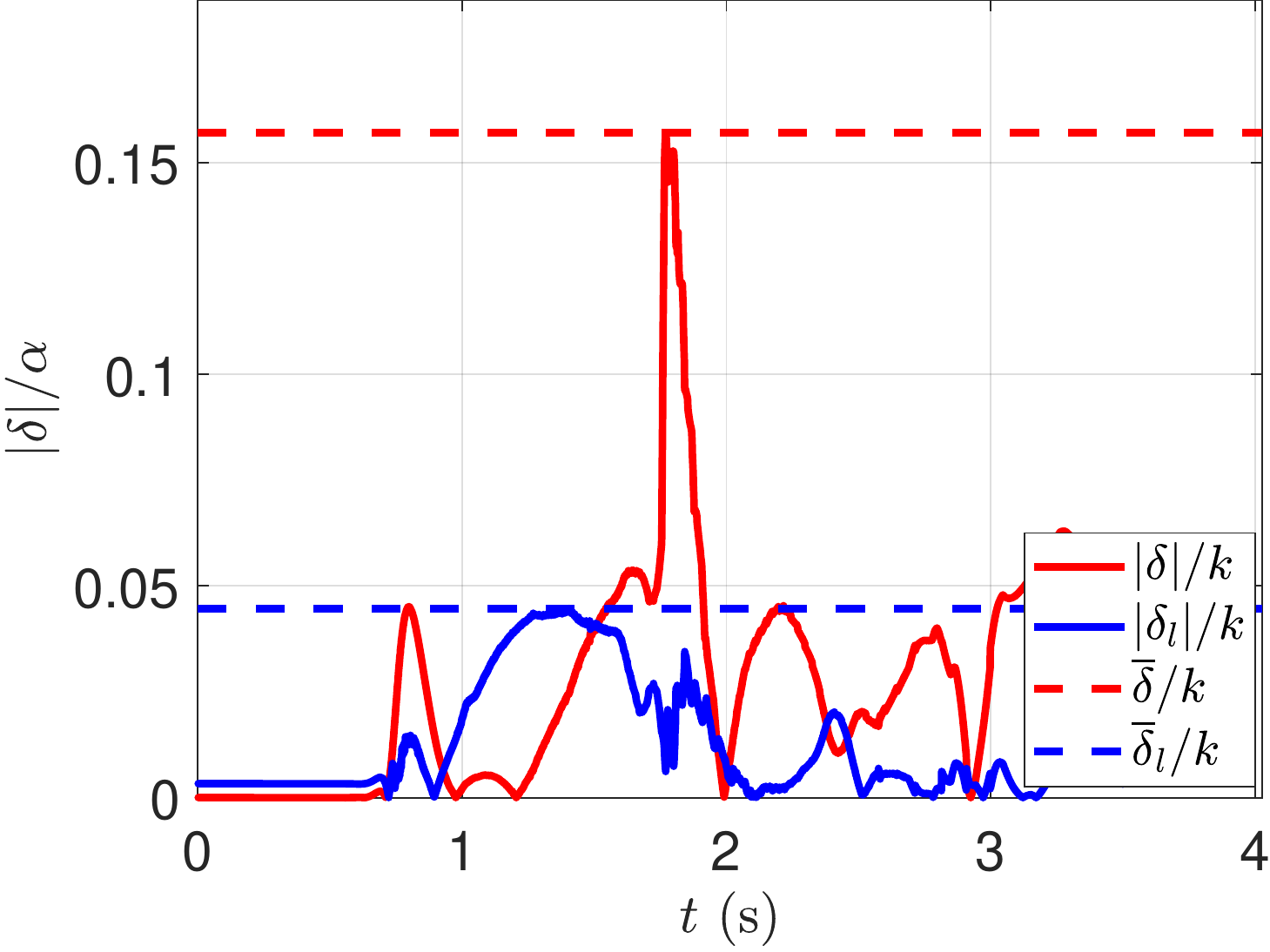}}
    \end{subfloat}
    \hfill
    \begin{subfloat}
     {\includegraphics[scale =0.43,valign=t]{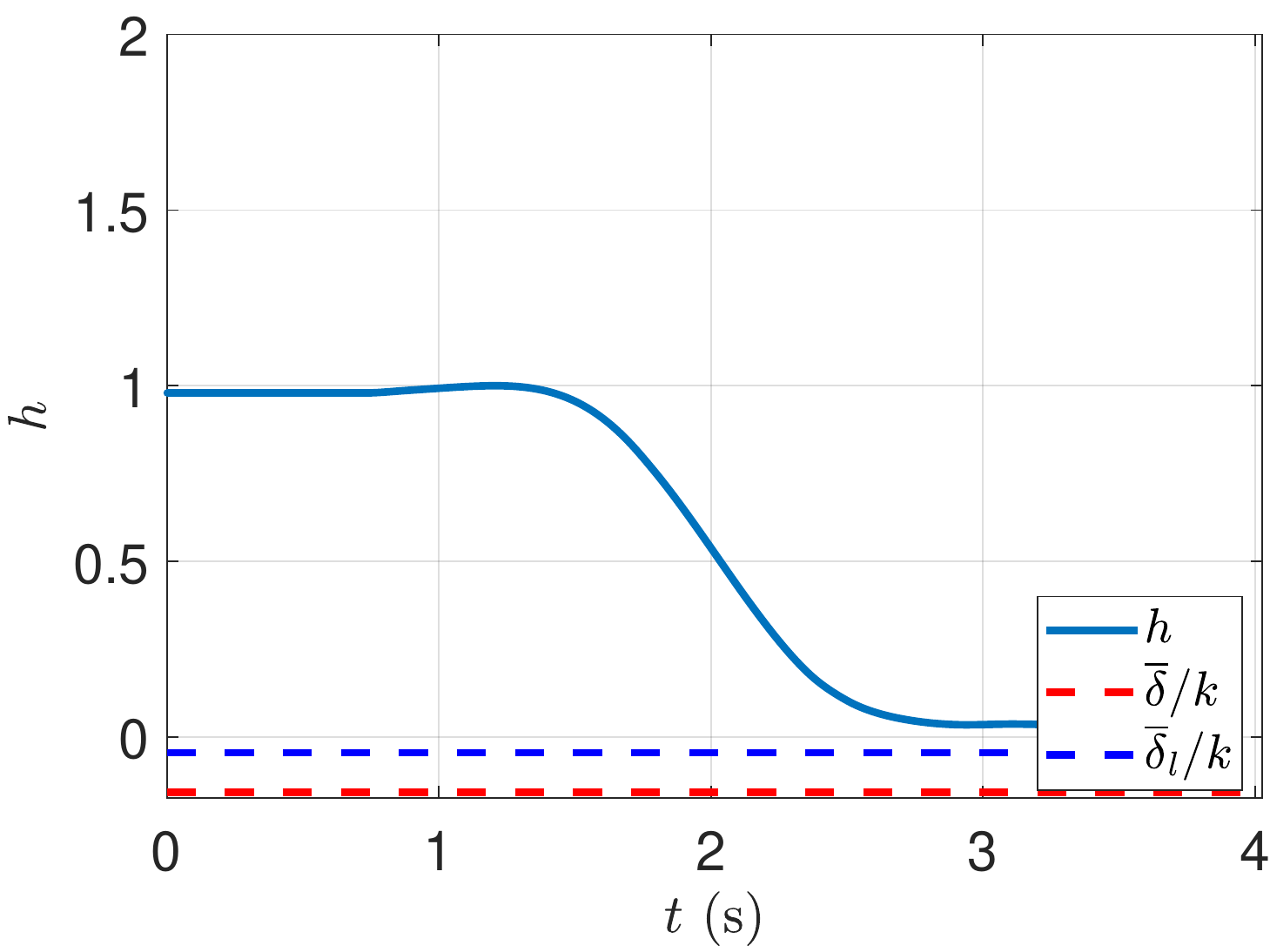}}
    \end{subfloat}
        \caption{Simulation results with Segway platform demonstrating improvement in PSSf behavior. \textbf{(Left)} Robotic Segway platform model used in simulation. \textbf{(Center)} Absolute value of the projected disturbance $\delta$ along the trajectory without learning models (\eqref{eqn:projdist},red) and with learning models (\eqref{eqn:projdistest}, blue), with learning reducing the worse case projected disturbance ($\overline{\delta}/\alpha$). 
        \textbf{(Right)} The value of the barrier satisfies the corresponding worst case lower bound with and without learning being used to compute $\delta$. The worst case lower bound is raised with learning (the blue dashed line lies above the red dashed line).
        }
    \label{fig:simresults}
\end{figure*}

If the function $h:\R^n\to\R$ is a CBF for the nominal model \eqref{eqn:eomnominal} on $\C$, the uncertainty in the dynamics directly manifests in the time derivative of $h$:
\begin{align}
\label{eqn:hdotdecomp}
    \dot{h}(\mb{x},\mb{u})  = & \overbrace{\derp{h}{\mb{x}}(\mb{x})(\widehat{\mb{f}}(\mb{x})+\widehat{\mb{g}}(\mb{x})\mb{u})}^{\widehat{\dot{h}}(\mb{x},\mb{u})} \nonumber \\ & + \underbrace{\derp{h}{\mb{x}}(\mb{x})\mb{b}(\mb{x})}_{b(\mb{x})} + \underbrace{\derp{h}{\mb{x}}(\mb{x})\mb{A}(\mb{x})}_{\mb{a}(\mb{x})^\top}\mb{u}.
\end{align}
Given that $h$ is a CBF for \eqref{eqn:eomnominal} on $\C$, let $\mb{k}:\R^n\to\R^m$ be a Lipschitz continuous state-feedback controller such that:
\begin{equation}
\label{eqn:hdotassump}
   \sup_{\mb{u}\in\R^m} \widehat{\dot{h}}(\mb{x},\mb{u}) \geq  \widehat{\dot{h}}(\mb{x},\mb{k}(\mb{x})) \geq -\alpha(h(\mb{x})).
\end{equation}
Letting the projected disturbance be defined as: 
\begin{equation}
\label{eqn:projdist}
\delta = b(\mb{x})+\mb{a}(\mb{x})^\top\mb{k}(\mb{x}),
\end{equation}
Corollary \ref{cor:cbf2pssf} implies that if there exists a $\overline{\delta}>0$ such that $\vert b(\mb{x})+\mb{a}(\mb{x})^\top\mb{k}(\mb{x})\vert\leq\overline{\delta}$ for all $\mb{x}\in\R^n$, the uncertain system \eqref{eqn:eom} can be rendered PSSf on $\C$ with respect to the projection $h$ and projected disturbances $\delta$.

As in \cite{taylor2019learning}, we may wish to reduce the error between $\dot{h}$ and $\widehat{\dot{h}}$ by utilizing data-driven models to estimate the functions $b$ and $\mb{a}$. In particular, given Lipschitz continuous estimators $\widehat{b}:\R^n\to\R$ and $\widehat{\mb{a}}:\R^n\to\R^m$, \eqref{eqn:hdotdecomp} can be reformulated as:
\begin{align}
\label{eqn:hdotdecompestimators}
    \dot{h}(&\mb{x},\mb{u})  =  \overbrace{\derp{h}{\mb{x}}(\mb{x})(\widehat{\mb{f}}(\mb{x})+\widehat{\mb{g}}(\mb{x})\mb{u})+\widehat{b}(\mb{x})+\widehat{\mb{a}}(\mb{x})^\top\mb{u}}^{\widehat{\dot{h}}(\mb{x},\mb{u})} \nonumber \\ & + \underbrace{\derp{h}{\mb{x}}(\mb{x})\mb{b}(\mb{x})-\widehat{b}(\mb{x})}_{\tilde{b}(\mb{x})} + \underbrace{\left(\derp{h}{\mb{x}}(\mb{x})\mb{A}(\mb{x})-\widehat{\mb{a}}(\mb{x})^\top\right)}_{\tilde{\mb{a}}(\mb{x})^\top}\mb{u}.
\end{align}
Under the assumption that the introduction of the estimators does not violate the CBF condition, such that there exists a state-feedback controller $\mb{k}$ satisfying \eqref{eqn:hdotassump} with $\widehat{\dot{h}}$ defined as in \eqref{eqn:hdotdecompestimators}, we may define the projected disturbance as:
\begin{equation}
\label{eqn:projdistest}
    \delta = \tilde{b}(\mb{x}) + \tilde{\mb{a}}(\mb{x})^\top\mb{k}(\mb{x})
\end{equation}
As before, if there exists $\overline{\delta}>0$ such that $\vert\tilde{b}(\mb{x}) + \tilde{\mb{a}}(\mb{x})^\top\mb{k}(\mb{x})\vert\leq\overline{\delta}$ for all $\mb{x}\in\R^n$, Corollary \ref{cor:cbf2pssf} can be used to certify \eqref{eqn:eom} as PSSf on $\C$ with respect to the projection $h$ and projected disturbances $\delta$. The preceding statements are formalized in the following theorem:
\begin{theorem}
Let $\C\subset\R^n$ be the 0-superlevel set of a continuously differentiable function $h:\R^n\to\R$ with $0$ a regular value, and let $\widehat{h}:\R^n\to\R$ be defined as in \eqref{eqn:hdotdecomp} or \eqref{eqn:hdotdecompestimators}. If there exists a Lipschitz continuous state-feedback controller $\mb{k}:\R^n\to\R^m$ satisfying \eqref{eqn:hdotassump}, and $\overline{\delta}>0$ such that the corresponding projected disturbance defined as in \eqref{eqn:projdist} or \eqref{eqn:projdistest} satisfies $\vert\delta\vert\leq\overline{\delta}$, then \eqref{eqn:eom} is PSSf on $\C$ with respect to the projection $h$ and projected disturbances $\delta$.
\end{theorem}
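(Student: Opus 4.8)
The plan is to reduce the statement to a direct application of Corollary \ref{cor:cbf2pssf}, treating $h$ itself as the projection so that $y = h(\mb{x})$ and the projected dynamics coincide with the barrier time derivative $\dot{y} = \dot{h}(\mb{x},\mb{u})$. Because the conclusion (PSSf of \eqref{eqn:eom} on $\C$ with respect to the projection $h$ and disturbances $\delta$) is exactly the conclusion of the corollary, the work is entirely in matching the hypotheses.

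First I would take the decomposition \eqref{eqn:hdotdecomp} (or \eqref{eqn:hdotdecompestimators} in the learning case) and cast it into the scalar projected form \eqref{eqn:projscalardyn}. In both cases the nominal (known) portion $\widehat{\dot{h}}(\mb{x},\mb{u})$ is affine in $\mb{u}$, so I would set $f_y(\mb{x}) + \mb{g}_y(\mb{x})\mb{u} = \widehat{\dot{h}}(\mb{x},\mb{u})$, reading off $f_y = \derp{h}{\mb{x}}\widehat{\mb{f}}$ and $\mb{g}_y = \derp{h}{\mb{x}}\widehat{\mb{g}}$ (augmented by the estimators $\widehat{b}, \widehat{\mb{a}}$ in the learning case). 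The remaining residual terms $b(\mb{x}) + \mb{a}(\mb{x})^\top\mb{u}$ (respectively $\tilde{b}(\mb{x}) + \tilde{\mb{a}}(\mb{x})^\top\mb{u}$) evaluated at $\mb{u} = \mb{k}(\mb{x})$ are precisely the scalar projected disturbance $\delta$ of \eqref{eqn:projdist} (respectively \eqref{eqn:projdistest}).

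Next I would verify the two hypotheses of Corollary \ref{cor:cbf2pssf}. The controller condition \eqref{eqn:projcbfcond}, namely $f_y(\mb{x}) + \mb{g}_y(\mb{x})\mb{k}(\mb{x}) \geq -\alpha(y)$, becomes $\widehat{\dot{h}}(\mb{x},\mb{k}(\mb{x})) \geq -\alpha(h(\mb{x}))$ under the identification $y = h(\mb{x})$, which is exactly the assumed inequality \eqref{eqn:hdotassump}; and the uniform bound $\vert\delta\vert \leq \overline{\delta}$ supplied in the theorem hypothesis is exactly the boundedness requirement of the corollary. With both hypotheses in place, Corollary \ref{cor:cbf2pssf} immediately yields the claim, and the argument applies verbatim in the learning and non-learning cases since they differ only in the definitions of $\widehat{\dot{h}}$ and $\delta$.

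The step I expect to require the most care is not the logical structure — which is a bookkeeping match to the corollary — but confirming the regularity demanded by \eqref{eqn:projscalardyn} and \eqref{eqn:projcbfcond}, namely that the projected vector fields $f_y$ and $\mb{g}_y$ are Lipschitz continuous. Since $f_y = \derp{h}{\mb{x}}\widehat{\mb{f}}$ and $\mb{g}_y = \derp{h}{\mb{x}}\widehat{\mb{g}}$ (plus the Lipschitz estimators in the learning case), this follows from Lipschitz continuity of $\widehat{\mb{f}}, \widehat{\mb{g}}$ together with sufficient smoothness of $h$ — e.g. $\nabla h$ locally Lipschitz — on the relevant region. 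I would state this regularity explicitly so that the hypotheses of Corollary \ref{cor:cbf2pssf} are genuinely met before invoking it.
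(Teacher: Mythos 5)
Your proposal is correct and follows exactly the paper's own route: the paper proves this theorem implicitly through the discussion preceding it, which identifies $\widehat{\dot{h}}$ with the nominal projected dynamics, the residual terms evaluated at $\mb{u}=\mb{k}(\mb{x})$ with $\delta$ as in \eqref{eqn:projdist} or \eqref{eqn:projdistest}, and then invokes Corollary \ref{cor:cbf2pssf} with \eqref{eqn:hdotassump} playing the role of \eqref{eqn:projcbfcond}, just as you do. Your closing remark about needing $\nabla h$ (and hence $f_y$, $\mb{g}_y$) to be Lipschitz is a fair point of added rigor that the paper glosses over, but it does not change the argument.
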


\begin{figure*}[h]
    \hspace*{0 cm}
     \centering
    \begin{subfloat}
        {\includegraphics[trim={8.5in, .1in, 7in, 1in}, clip, scale =0.1075, valign =t ]{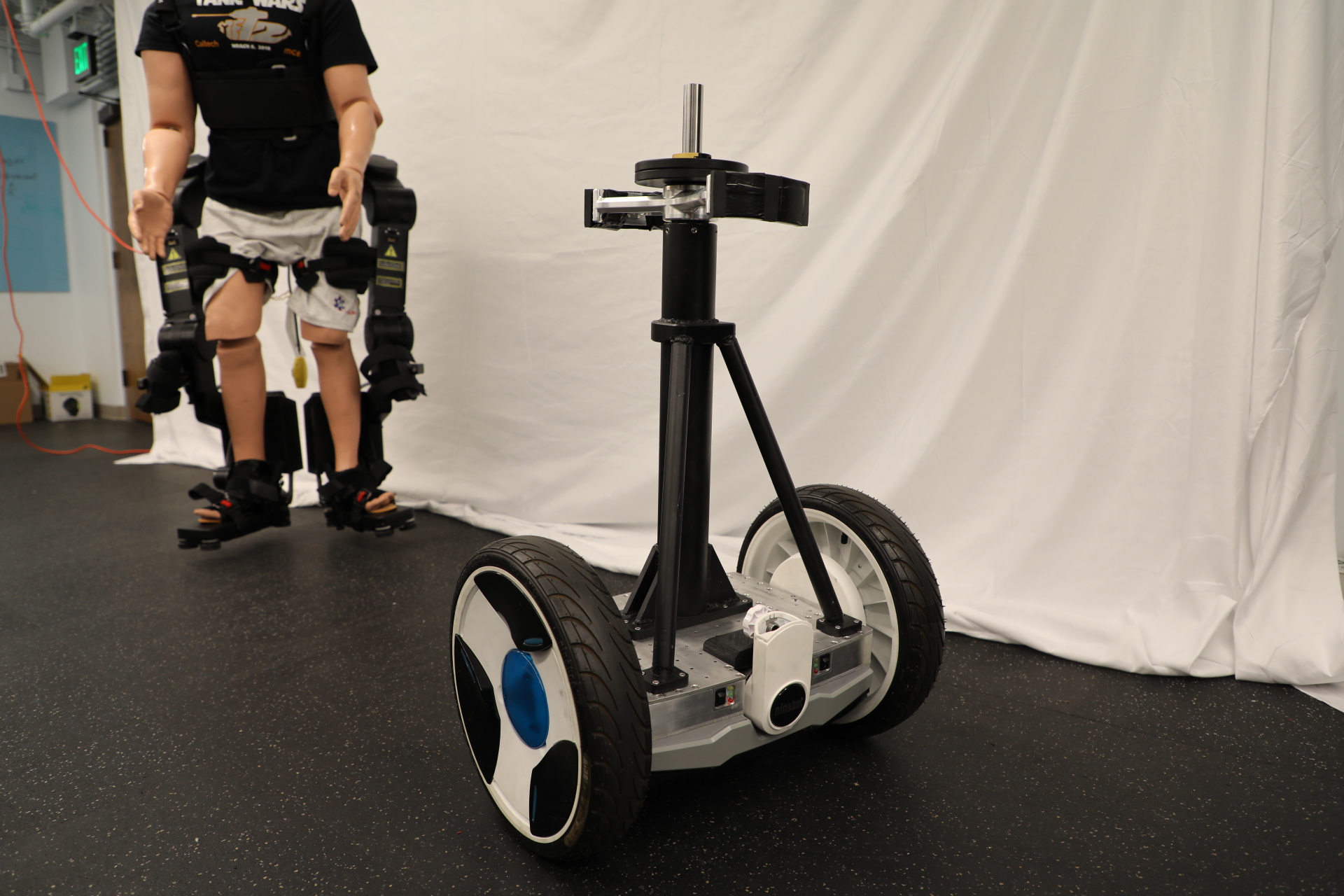}}
    \end{subfloat}
    \hfill
    \hspace*{0 cm}
    \begin{subfloat}
       {\includegraphics[scale =0.43, valign =t]{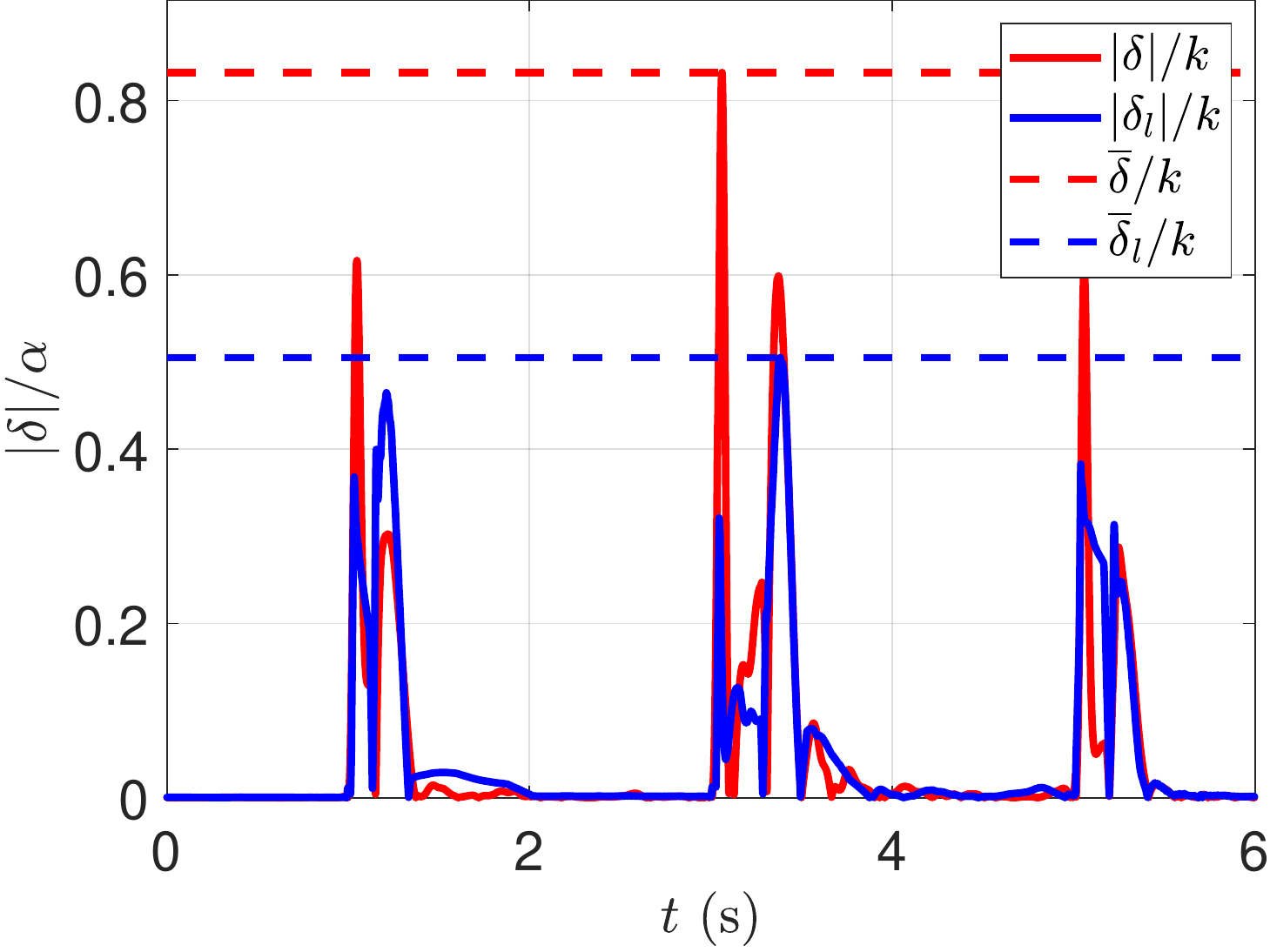}}
    \end{subfloat}
    \hfill
    \begin{subfloat}
     {\includegraphics[scale =0.43,valign=t]{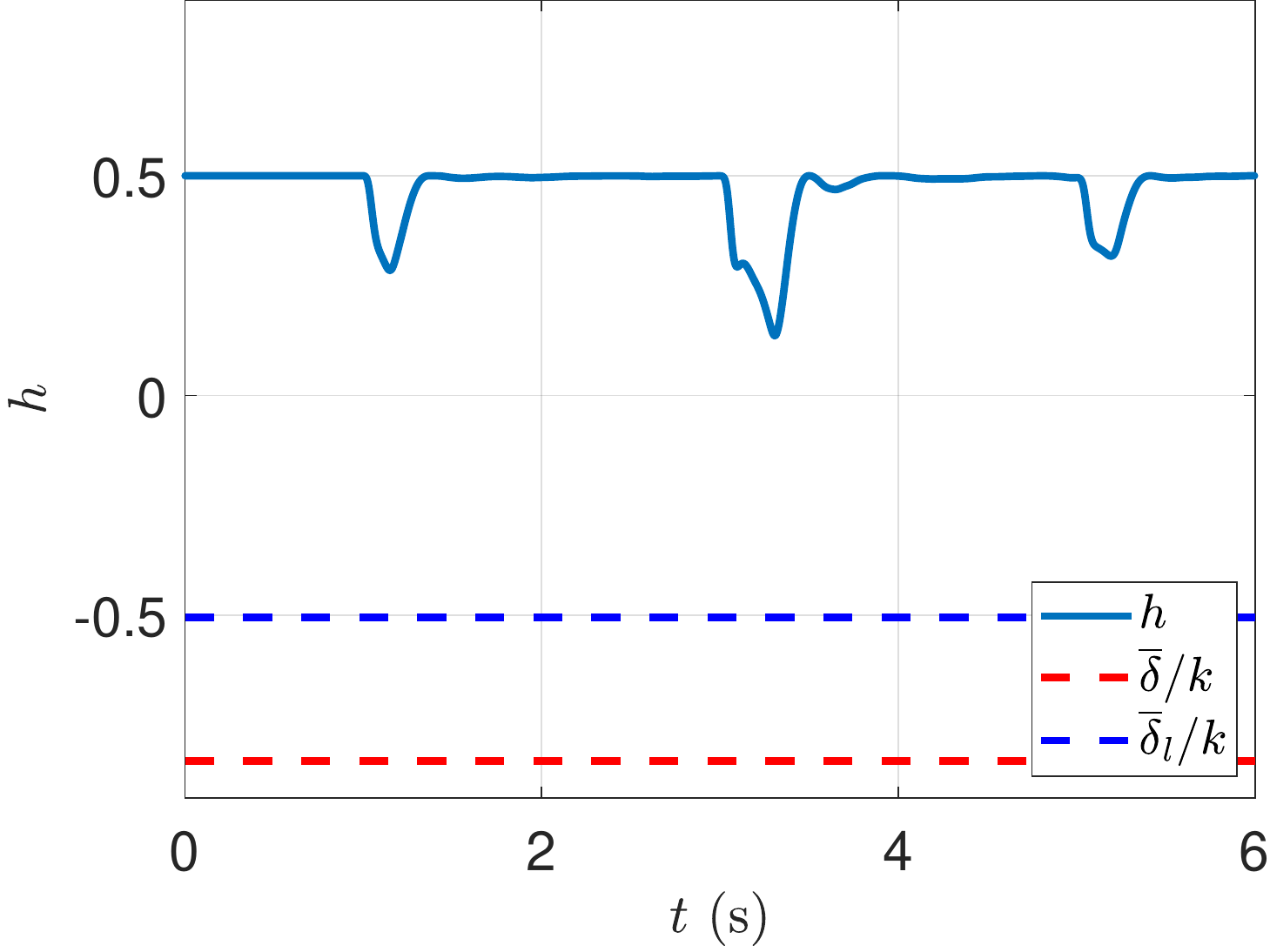}}
    \end{subfloat}
        \caption{Experimental results with Segway platform demonstrating improvement in PSSf behavior. \textbf{(Left)} Physical robotic Segway platform used in experimentation. \textbf{(Center)} Absolute value of the projected disturbance $\delta$ along the trajectory without learning models (\eqref{eqn:projdist},red) and with learning models (\eqref{eqn:projdistest}, blue), with learning reducing the worse case projected disturbance ($\overline{\delta}/\alpha$). 
        \textbf{(Right)} The value of the barrier satisfies the corresponding worst case lower bound with and without learning being used to compute $\delta$. The worst case lower bound is raised with learning (the blue dashed line lies above the red dashed line).
        }
    \label{fig:hwresults}
\end{figure*}

In the presence of estimators, this theorem defines a quantitative relationship between the prediction error of the estimators, $\vert\dot{h}(\mb{x},\mb{k}(\mb{x}))-\widehat{\dot{h}}(\mb{x},\mb{k}(\mb{x}))\vert= \vert\delta\vert$, and the degradation of the safety of the closed-loop system. As the prediction error is reduced (via additional training data or more complex learning models), the set kept safe more closely resembles $\C$.

\section{Experimental Validation}
\label{sec:simexp}
To demonstrate the ability of learning to improve safety guarantees via Projection-to-State Safety, we deployed the episodic learning framework with CBFs established in \cite{taylor2019learning} on a robotic Segway platform,  seen in Figure \ref{fig:simresults} and \ref{fig:hwresults}, in simulation and experimentally. The planar, 4 dimensional, Segway was considered, with states given by horizontal position, horizontal velocity, pitch angle, and pitch angle rate. The input the system is specified as a torque about the wheel at the base of the Segway. In both cases a sequence of episodes were ran to train estimators $\widehat{b}$ and $\widehat{\mb{a}}$. 

In each episode the Segway was set to track a desired trajectory in the pitch angle space without violating a barrier function on a portion of its state, using the safety-critical control formulation in \cite{gurriet2018towards}. After the sequence of episodes, the Segway was ran once more with a learning-informed controller, and the projected disturbance $\delta$ as defined in \eqref{eqn:projdist} and \eqref{eqn:projdistest} was computed. The worst case disturbance $\overline{\delta}$ was found, and a lower bound on $h$ for that trajectory was determined using the fact $h\leq\alpha^{-1}(\overline{\delta})\implies\dot{h}\geq 0$. In both simulation and experimental results, $\alpha(r) = kr$ with $k>0$. 

In simulation, the Segway was given a bound on its position in space, forcing it to stay within one meter of its starting location. The CBF was generated through the backup controller method \cite{gurriet2018online}. The value of the CBF is computed at each time-step by integrating the system forward in time under a backup control law. Sensitivity analysis along the trajectory is used to compute the gradient of the CBF. This simulation result highlights the ability of learning to reduce worst case disturbances for complex CBFs that cannot be expressed in closed-form. The simulation was done in a ROS-based C++ environment \cite{singletary2019}. The simulation environment accurately simulates the physical system by adding input delay, sensor noise, and state estimation. Experimentally, a simple CBF was specified to limit the pitch angle and pitch angle rate of the Segway to an ellipse about the Segway's equilibrium state. The desired pitch angle trajectory would lead to the Segway tipping quickly, thereby violating the safety set in the absence of the CBF and safety-critical control formulation.

In both cases, we see that introducing learning estimators into the computation of the projected disturbance decreases the worse case disturbance ($\overline{\delta}>\overline{\delta}_l$). This leads to a greater lower bound on $h$, and thus a stronger guarantee on the PSSf behavior of the system. We note that the conservative nature of the lower bounds on $h$ arise from the fact that the worst case disturbance $\overline{\delta}$ along the trajectory is used. If the worst case disturbance can be reduced (by data-aware control synthesis), stronger guarantees on safety can be made.

\section{Conclusions}
We presented a novel method for assessing the impact of disturbances on safety in a project environment via Projection-to-State Safety, and considered how it can be utilized in conjunction with learning to mitigate the impact of model uncertainty on safety. We demonstrate the ability of learning to improve the guarantees endowed by PSSf in simulation and experimentally on a Segway platform. Future work includes developing data-driven methods for quantifying the worst case projected disturbance, and synthesizing data-aware controllers that reduce the projected disturbance.

\bibliographystyle{plain}
\bibliography{taylor_main}
\end{document}